\newcommand{\E}{\mathop{\textrm{E}}}
\newcommand{\R}{\mathrm{I}\!\mathrm{R}}
\newtheorem{lemma}{Lemma}
\newtheorem{theorem}{Theorem}
\title{YEAST: Yet Another Sequential Test}
\author{%
  Alexey Kurennoy, Majed Dodin, Tural Gurbanov, and Ana Peleteiro Ramallo\\
  }
\begin{document}

\maketitle

\begin{abstract}

Online evaluation of machine learning models is typically conducted through A/B experiments. Sequential statistical tests are valuable tools for analysing these experiments, as they enable researchers to stop data collection early without increasing the risk of false discoveries. However, existing sequential tests either limit the number of interim analyses or suffer from low statistical power. In this paper, we introduce a novel sequential test designed for continuous monitoring of A/B experiments. We validate our method using semi-synthetic simulations and demonstrate that it outperforms current state-of-the-art sequential testing approaches. Our method is derived using a new technique that ``inverts'' a bound on the probability of threshold crossing, based on a classical maximal inequality.

\end{abstract}


\section{Introduction}

Online A/B experiments have become a standard approach for evaluating the impact of machine learning models on the systems and products where they are deployed. While this approach offers significant advantages, it also poses the risk of exposing users to harmful changes that could degrade their experience. Unlike traditional techniques such as Student's t test, sequential testing supports repeated significance checks without increasing the false positive rate. This makes it well-suited for real-time monitoring of experiments, allowing early detection of harmful effects, which helps protect user experience and reduce financial losses.

Sequential testing methods can be broadly divided into discrete and continuous. Discrete sequential tests permit only a fixed number of interim checks while continuous methods impose no such limit and allow checking for significance at the experimenter's will - even after each new observation. In this paper, we focus on continuous sequential tests. We find that the absence of a restriction on the number of interim checks is useful in practice, as it avoids a situation where the experimenter and the stakeholders observe a degradation in the metric of interest but have to wait until the next checkpoint before assessing the statistical significance and making decisions.

In this paper, we propose a novel continuous sequential test that demonstrates a better control of Type-I error and a greater power than current state-of-the-art methods for continuous experiment monitoring. The theoretical technique we use to derive and (asymptotically) justify this novel test is, to the best of our knowledge, also new.

We validate the proposed sequential test using a semi-synthetic simulation experiment based on a public real-world data set. We share the associated code for greater reproducibility \cite{KGu:24}.

To summarise, our \textbf{main contributions} are:

\begin{itemize}
    \item We propose a novel (asymptotic) sequential test. This test supports unlimited interim significance checks.
    
    \item We present a new theoretical technique that consists of inverting the bounds obtained from the (generalised) Levy inequalities~\cite[Paragraph~29.1, A]{Loe:60}.
    
    \item We conduct an empirical study and demonstrate that the proposed sequential test has a better Type-I error control and a higher power than the current state-of-the-art methods for continuous experiment monitoring.
    
    \item We {share the associated code} \cite{KGu:24} for greater reproducibility and to promote further research on this topic.
\end{itemize}

\noindent
The remainder of this paper is structured as follows. Section~\ref{sec:related} reviews related work. Section~\ref{sec:sst} introduces the proposed method, explains its derivation, and then presents its formal theory. 
In Section~\ref{sec:validation}, we assess the correctness of the proposed method using a semi-synthetic experiment (based on a public real-world data set). 
Section~\ref{sec:limitations} discusses the limitations of the proposed approach, and Section~\ref{sec:conclusion} concludes.

\section{Related Work}\label{sec:related}

Our work is related to a large literature on sequential decision-making
based on sequences of hypothesis tests, confidence sequences, or sequences of "always-valid" p-values. While early work in this field was primarily concerned with clinical trials and industrial quality control~\cite{dodge1929method,bross1952sequential,armitage1954sequential}, more recent work is typically motivated by the sequential nature of online experiments, where users arrive sequentially and generate outcomes that are observed quickly relative to the duration of the experiment~\cite{larsen2023statistical, johari2022always, Howard2021}. Here, the goal of sequential testing is typically either to reduce the opportunity cost of longer experiments, as in the context of best-arm identification in multi-armed bandit problems~\cite{Bandit1,Bandit2,Bandit3,Bandit4}, or to manage risk by faster detection of harmful treatments~\cite{Lindon22, ham2023designbased}. 

\subsection{Martingale Methods and E-Processes}

Building on the seminal work of Wald~\cite{Wald45} and Doob \cite{doob1953stochastic}, most modern sequential procedures are based on maximal inequalities for appropriately constructed martingales that provide time-uniform statistical guarantees.
While Wald's original sequential probability ratio test (SPRT) was developed from scratch, its construction is essentially of this “martingale-type”. SPRT sequentially evaluates the likelihood ratio test statistic at each sample size for the null hypothesis $H_{0}: \theta = \theta_{0}$ against the simple alternative $H_{1}: \theta = \theta_{1}$. The test continues until the statistic crosses one of two predefined bounds, at which point the null hypothesis is either rejected or not rejected accordingly. SPRT requires specifying a single-point alternative, which limits its use for general-purpose experimental monitoring. To address this limitation, likelihood ratio methods using the method of mixtures (mSPRT) have been developed. These approaches replace the single-point alternative with a composite hypothesis and the standard likelihood ratio --- with a mixture, which remains a martingale under the null \cite{Robins1970}.

In addition to mSPRT, the class of martingale methods encompasses more sophisticated approaches constructed using other martingales and falling under the framework of generalized always-valid inference (GAVI) \cite{johari2022always,Howard2021,ham2023designbased,LAl:22,LHa:22,BKa:22}. 

Another recent branch of martingale-based methods is built upon the principle of testing by betting (see \cite{PRa:23} and references therein).

The martingale-based approach was generalised through the concept of \emph{e-processes} \cite{Grunwald_deHeide_Koolen_2024_SafeTesting,Shafer_2021_TestingByBetting}~--- stochastic processes characterised by a property that allows for straightforward construction of sequential tests. The roots of this perspective can be traced back to \cite{Vovk_1993_LogicOfProbability}, who introduced the use of test martingales within a game-theoretic framework as a foundation for statistical testing. This remains an active area of research, with a growing literature developing new constructions and applications of e-processes \cite{Shekhar_Ramdas_2021_NonparametricTwoSampleByBetting_arxiv,Shekhar_Ramdas_2023_NonparametricTwoSampleByBetting_TIT}. See \cite{Ramdas_Wang_2025_HypothesisTestingWithEvalues} for a recent overview of the topic and further references.

\subsection{Group Sequential Methods}

A conceptually different approach is followed by methods based on group sequential tests (GST) \cite{Po:77,BFl:79,SWe:82,GDe:83}. GST methods rely on (asymptotic) characterizations of the joint distribution of a vector of test statistics across increments of data, which permits the computation of sequential boundaries that uniformly control the false discovery rate via multivariate (numerical) integration. 
Earlier GST methods (such as \cite{Po:77,BFl:79}) required a fixed number and timing of interim analyses. In contrast, the later generalisation \cite{GDe:83} permits flexible (in theory up to continuous) monitoring. This flexibility is achieved using an alpha-spending function, which controls the allocation of Type I error across analyses, based on a prespecified maximum sample size.

\subsection{Simple Sequential A/B Testing}
Our approach has similarities to \emph{simple sequential A/B testing} \cite{Mi:15}. However, the test from \cite{Mi:15} only allows monitoring count-type metrics (for example, the number of clicks or purchases) whilst our proposed method can be used to monitor metrics of any type, including real-valued financial metrics such as revenue. One can view our work as a generalisation of \cite{Mi:15} yet we emphasise that this generalisation is not at all trivial as the theory presented in \cite{Mi:15} fundamentally supports only count-type metrics.

\section{YEAST: YEt Another Sequential Test}\label{sec:sst}

In this section, we present a novel, alternative sequential testing method proposed in this paper.

\subsection{Problem Setup and Notation}\label{subsec:setup}

Suppose we are conducting an A/B test
and have a certain metric of interest.
We want to constantly track this metric and have a tool to check if the difference between the groups is at any point large enough to conclude that the treatment has an impact on the metric. The tool must have good control of the false detection rate despite the repeated checks. In other words, it must allow for peeking into the interim results of the experiment without inflating the false detection rate.

To develop a tool satisfying the above requirement, we consider the difference $S$ in the total value of the metric of interest accumulated by the control $(W = 0)$ and the treatment $(W = 1)$ groups. Typically, this difference $S$ can be represented as a sum of increments over a stream of events. For example, if the metric of interest is revenue then the tracked difference $S$ can be decomposed into the sum of order values over checkout events where the order value is taken with a positive sign if it comes from the control group and with a negative sign if it comes from the treatment arm. Mathematically, the increments can be written as 

\begin{equation}\label{eq:X=}
X_{i} = \left( 1 - 2 W_{i} \right) Y_{i},
\end{equation}
where $Y_i$ is the value or outcome associated with the $i$-th event (such as the order value in the example above) and $W_i$ is the event assignment indicator (it equals 0 if the event was generated by a subject from the control group and 1 if its subject belongs to the treatment group). With this notation at hand, the difference in the metric of interest after observing $n$ events can be written as the running sum
\begin{equation} \label{eq:Sn}
    S_n = \sum_{i=1}^n X_i,\quad n\ge 1,
\end{equation}
as illustrated in the example data set displayed in Table~\ref{tbl:Sn}.

\begin{table}[H]
\caption{An Example of Experiment Data}\label{tbl:Sn}
\centering
\begin{small}
\begin{tabular}{rrrrrr}
  \hline
 i & timestamp & group & Y & X & S \\ 
  \hline
1 & 2023-08-01 12:00:00 & control & 175.0 & 175.0 & 175.0 \\ 
  2 & 2023-08-01 12:00:02 & treatment & 35.5 & -35.5 &  139.5 \\ 
  3 & 2023-08-01 12:00:05 & treatment & 20.0 & -20.0 & 119.5 \\ 
  4 & 2023-08-01 12:00:10 & control & 100.0 & 100.0 & 219.5 \\ 
  \ldots & \ldots & \ldots & \ldots & \ldots & \ldots \\
   \hline
\end{tabular}
\end{small}
\end{table}

So far we have used revenue monitoring as an example but one can think of examples from other domains such as \emph{video streaming} (metric: hours streamed; the events are user sessions; the event value is the total duration of videos streamed within the session), \emph{online advertisement} (metric: number of clicks; the events are clicks; the event value is 1), \emph{medical drug trial} (metric: number of cases of side-effects; the events are incoming patients; the event value is the indicator of side-effect presence after taking the drug), \emph{online subscription services} (metric: number of new users who convert into a paid subscription; the events are newly acquired users; the event value is the indicator of subscribing), \emph{banking} (metric: total transaction fee; the events are transactions; the event value is fee amount), etc.

\subsection{Assumptions and the Null Hypothesis}\label{sec:assumptions}

We will now state and discuss our assumptions and the null hypothesis.

\textbf{Assumption 1 (Random Subject Assignment)} \textit{The subjects are assigned to the control or treatment uniformly at random and with equal probability.}

We work with the following null hypothesis.

\noindent\textbf{$\textup{H}_0$}: \textit{the treatment does not affect either the event outcome distribution or the frequency of events, so that}
\begin{itemize}
    \item [(a)] the event outcomes and the assignment indicators are independent from each other, i.\,e.
        \begin{equation}\label{eq:H0a}
            \{Y_1,\,Y_2,\,\ldots\} \perp\!\!\!\perp \{W_1,\,W_2,\,\ldots\},
        \end{equation}
    \item [(b)] and the frequency of events in each group is the same, implying (under Assumption 1) that
        \begin{equation}\label{eq:H0b}
            \Pr\{W_i = 1\} = 0.5\quad \forall\, i\ge 1.
        \end{equation}
\end{itemize}

\noindent In the setting of this paper, it is important to distinguish between subjects (such as users) and events generated by subjects (such as orders). Throughout the paper, we work with variables associated with events and not subjects. For example, $Y_i$, $i\ge 1$, are event outcomes and $W_i$, $i\ge 1$, are event assignment indicators. One implication of this is that even though the subjects are assigned to one of the two groups independently at random, the variables associated with individual events can be dependent (as some of them can be generated by the same user). Therefore, in our main theory (Theorems~\ref{thm:sst} and \ref{thm:power}), we allow the event outcomes $Y_i$, $i\ge 1$, to be dependent and, likewise for the event assignment indicators $W_i$, $i\ge 1$. The part (a) of the null hypothesis stated above only assumes that the two sets of variables ($\{Y_1,\,Y_2,\,\ldots\}$ and $\{W_1,\,W_2,\,\ldots\}$) are independent from each other but the variables within each of the two sets are allowed to be dependent. Note that we also do not require $Y_i$, $i\ge 1$, to be identically distributed.

The null hypothesis stated above exhibits the difference between our setup and the classical testing for a zero average treatment effect. Specifically,
\begin{itemize}
    \item our testing procedure acts on the level of events generated by experiment subjects and not the experiment subjects themselves (for example, orders and not customers);

    \item we assume that under the null hypothesis, the treatment does not affect the ``event generation process'' in any way.
\end{itemize}

In the revenue monitoring example, the latter implies that both the frequency of orders and the order value distribution stay the same in both groups under the null. We discuss the limitations of the considered setup in Section~\ref{sec:limitations}.

The approach described so far allows us to derive a simple but effective sequential testing procedure we present next.

\subsection{Proposed Sequential Testing Method}\label{subsec:sst-intro}

The proposed monitoring method consists of tracking the difference in the metric of interest between the experiment groups and (continuously) comparing it with a constant (alerting) boundary. Whenever the boundary is crossed, we can conclude (with a predefined significance level $\alpha$) that the treatment has an impact on the metric.

The procedure requires setting the maximum number of events to be observed during the monitoring period that we denote by $N$. In other words, the boundary is computed assuming that the monitoring will not continue after $N$ events have been collected. See Section~\ref{sec:inputs} for how to set $N$ in practice.

The proposed method has only one other parameter that needs to be estimated beforehand - the (scaled) variance of the difference in the metric of interest at the end of monitoring, $V_N = var(S_N) / N$. The estimator of $V_N$ needs to be consistent under the null hypothesis. Given the estimate $\hat V_N$ (computed by such an estimator), the alerting boundary is set to
\begin{equation} \label{eq:constantThreshold}
    b^* = z_{1-\alpha/2}\cdot\sqrt{N\hat V_N},
\end{equation}
where $z_{1-\alpha/2}$ is the quantile of the standard normal distribution of level $1-\alpha/2$. The proposed simple sequential test proceeds by tracking $S_n$ and comparing it against $b^*$.

The necessity to fix or estimate $N$ and to provide an estimate of $V_N$ beforehand can be perceived as a disadvantage of the proposed approach. However, we argue that
in many practical scenarios, those parameters can be reliably set based on pre-experiment data (in fact, the underlying data distribution has to have some level of stability or regularity in time for the A/B testing to be meaningful in the first place).

\subsubsection{Informal Derivation of the Proposed Method}\label{subsubsec:informal} The derivation of the alerting boundary \eqref{eq:constantThreshold} follows two steps. First, we use (generalised) Levy's inequality\footnote{We leverage a generalised version of the inequality from \cite[Paragraph~29.1, A]{Loe:60} that allows the observations to be dependent.} to bound the false detection rate by the probability that the tracked sum exceeds the threshold at the end of monitoring, i.\,e.
\begin{equation}\label{fdr-sst}
    FDR = \Pr\left\{\max_{n=1}^N S_n > b\right\} \le 2 \Pr\left\{S_N > b\right\}.
\end{equation}
In the second step we use the Central Limit Theorem to approximate the latter probability,
\begin{equation}\label{cross-prob-approx}
    \Pr\left\{S_N > b\right\} \approx 1 - \Phi\left(\frac b {\sqrt{var(S_N)}}\right) = 1 - \Phi\left(\frac b {\sqrt{N V_N}}\right),
\end{equation}
 where $\Phi$ denotes the CDF of the standard normal distribution.
Then by setting the threshold $b$ to $z_{1-\alpha/2}\cdot\sqrt{N V_N}$ (cf. \eqref{eq:constantThreshold}), we ensure that the false detection rate is under control, i.\,e.
    $FDR \lessapprox \alpha$.
This is formalised in Theorem~\ref{thm:sst} below.

The proposed testing procedure is defined in Algorithm~\ref{alg:SST} and will be referred to as \emph{YEAST} (from YEt Another Sequential Test). The estimation of $V_N$ is discussed in Section~\ref{sec:inputs} below.

\begin{algorithm}
\caption{YEAST}\label{alg:SST}
\begin{small}
\begin{algorithmic}
\Require {$N$, $\hat V_N$}
\State $b^* \gets z_{1-\alpha/2}\sqrt{N\hat V_N}$
\For{$n=1,\,\ldots,\,N$}
    \If{$S_n > b^*$}
        \State \text{flag significance}
    \EndIf
\EndFor
\end{algorithmic}
\end{small}
\end{algorithm}

For a two-sided test, we track $|S_n|$ and compare it against
    $b^*_{2-sided} = z_{1-\alpha/4}\cdot\sqrt{N\hat V_N}$.

Mind that differently from the non-sequential testing, the significance level $\alpha$ needs to be divided by 2 for the one-sided test and by 4 for the two-sided test. This is due to the factor of two in the right-hand side of bound~\eqref{fdr-sst}.

Note that while baring some resemblance in the formulas, YEAST is fundamentally different from GST \cite{GDe:83} methods. The latter use a statistic normalized by the observed information (i.e., the variance of the cumulative data up to time $n$) and adjust the rejection threshold at each look. In contrast, our method normalizes the partial cumulative sum $S_n$ by the variance of the full dataset $var\left(S_N\right)$ and compares it to a fixed threshold. As a result, the boundary of YEAST is constant (does not change from one look to another) while this is not the case for GST. The computation of the YEAST boundary is computationally simple and does not require numerical integration.

\subsubsection{Setting the Input Parameters}\label{sec:inputs}

As was described in the previous subsection, the computation of the alerting boundary of the proposed sequential testing procedure requires two inputs: the number of events $N$ to be observed during the monitoring and an estimate $\hat V_N$ of the (scaled) variance of the tracked metric difference at the end of the monitoring. In this subsection, we discuss how to provide those inputs in practice.

To set $N$, we estimate the expected number of events to be collected over the experiment time frame. In the simplest form, this can amount to setting $N$ to the number of events observed in a pre-experiment period of the same duration as the experiment we want to monitor.

For variance estimation, we suggest using pre-experiment data as well. Obtaining an accurate estimate this way requires the pre-experiment period to be representative of the experiment time. However, in cases where this cannot be assumed, the validity of fixed-duration randomised experiments is questionable in principle (as the effects measured during such experiments may not generalise beyond their time frames). So in the context of A/B testing, this is a natural assumption. Note that estimating variance from pre-experiment data is a standard procedure performed before an A/B test because it is used in computing the MDE (minimum detectable effect) and identifying the necessary sample size.

As was noted in Section~\ref{sec:assumptions}, the event outcomes $Y_i$, $i\ge 1$, can be dependent because some of the events are generated by the same user. The same holds for the event assignment indicators $W_i$, $i\ge 1$. Consequently, the increments $X_i$, $i\ge 1$, of the monitored trajectory can be serially correlated in which case the variance $var(S_N)$ does not equal the sum of variances $var(X_i)$. The serial correlation of the increments has a clustered nature in this case because variables corresponding to different users are independent due to random assignment. Therefore, we propose to use an estimator that is robust to clustered serial correlation \cite{Cam:11} to estimate $V_N$. Implementations of such estimators are readily available, for example as part of the \texttt{sandwich} package in R \cite{ZSu:20, Zei:04} (see function \texttt{vcovCL}). The cluster variable in this case is the user identifier. As we will show in our experiments with real-world data in Section~\ref{sec:validation}, the use of a robust (clustered) variance estimator (instead of the simple sum of $var(X_i)$) can be utterly important for effective control of the false detection rate.

While the need to set a finite horizon $N$ can be viewed as a disadvantage of YEAST, we have not found it problematic in practice. When running A/B experiments it is typical to plan for a certain duration and the number of observations collected during the experiment time frame is usually predictable. In addition, the decision boundary of YEAST depends on the square root of $N$, which makes the method more robust to inaccuracies in the estimation of $N$. That said, we acknowledge that there are situations where it is difficult to reliably estimate $N$ and in that case, it may indeed be better to resort to other methods (e.g. \cite{Howard2021}) that do not require setting a finite horizon. In the empirical evaluations in Section~\ref{sec:validation}, we never assume that $N$ is known and estimate it from pre-experiment data (as part of the method). Hence, the respective empirical results incorporate the uncertainty in the estimation of $N$.

\subsubsection{Formal Statements}\label{subsec:statements}

In this subsection, we will justify the correctness of the proposed testing procedure. This includes a statement demonstrating that the proposed method controls the false detection rate (Theorem~\ref{thm:sst}), and a statement showing that the power of the method is asymptotically one (Theorem~\ref{thm:power}).

In the formal statements below, $(\Omega,\,\mathcal{F},\,\Pr)$ is a probability space and $\R$ stands for the set of real numbers. As before, $\Phi$ denotes the CDF of the standard normal distribution. The proofs of the statements can be found in Appendix~A.

The first theorem 
states that when the treatment has no actual impact, detections occur sufficiently rarely. 

\begin{theorem}[False Detection Rate Control]\label{thm:sst}
    Let $Y_i\colon\Omega\mapsto\R$ and $W_i\colon\Omega\mapsto\{0,\,1\}$, $i=1,\,\ldots,\,N$, be random variables and let $S_n$, $n\ge 1$, be the running sum defined by \eqref{eq:X=}--\eqref{eq:Sn}. Suppose that Assumption~1 and the null hypothesis \eqref{eq:H0a}--\eqref{eq:H0b} hold and that a version of the Central Limit Theorem is valid for the sequence of random variables $X_1,\,X_2,\,\ldots$ defined by \eqref{eq:X=}. Assume that $var(S_N) > 0$ for all $N\ge 1$ and set $V_N := var(S_N)/N$. 

    Then for any $\gamma > 0$ and any $\varepsilon > 0$ there exists $N_{\gamma,\,\varepsilon}$ such that for all $N\ge N_{\gamma,\,\varepsilon}$ we have
    \begin{equation*}\label{eq:bound-with-gamma}
        \Pr\left\{\max_{n=1}^N S_n > \gamma\sqrt{N V_N}\right\} \le 2\left(1 - \Phi\left(\gamma\right)\right) + \varepsilon.
    \end{equation*}
    In particular, for any significance level $\alpha\in(0,\,1)$ and any $\varepsilon > 0$ the false detection rate $\Pr\left\{\max_{n=1}^N S_n > z_{1-\alpha/2}\sqrt{N V_N}\right\}$ is bounded by $\alpha + \varepsilon$, provided that $N$ is sufficiently large.
\end{theorem}

\noindent Note that we deliberately do not spell out specific conditions, ensuring that the central limit properties hold for the sequence of increments $\{X_i\}$. Multiple sets of such conditions exist. A good overview of them can be found in \cite[Chapter~V]{Whi:84}.

The above theorem not only justifies the proposed method, but also provides insight into how the quality of the false detection rate control depends on the error in setting the input parameters of the method ($N$ and/or $V_N$). Specifically, if we overestimate or underestimate the product $N V_N$ by a factor of $r$ the effective threshold used during monitoring will equal $z_{1-\alpha/2}\sqrt{r}\sqrt{N V_N}$ and, as follows from \eqref{eq:bound-with-gamma}, the false detection rate will be (approximately) bound by $2 \left(1 - \Phi\left(z_{1-\alpha/2}\sqrt{r}\right)\right)$
instead of $\alpha$. For example, if $NV_N$ is overestimated by 20\% the bound becomes equal approximately $3\%$ instead of the nominal significance level of $5\%$. 

Our second theorem claims that, when treatment has an effect, its detection can be made as certain as desired if we monitor for sufficiently long. 

\begin{theorem}[Power]\label{thm:power}
    Let $Y_i\colon\Omega\mapsto\R$ and $W_i\colon\Omega\mapsto\{0,\,1\}$, $i=1,\,\ldots,\,N$, be random variables, and let $S_n$, $n\ge 1$, be the running sum defined by \eqref{eq:X=}--\eqref{eq:Sn}. Suppose that Assumption~1 holds and a version of the Central Limit Theorem is valid for the sequence of random variables $X_1,\,X_2,\,\ldots$ defined by \eqref{eq:X=}. Assume that $var(S_N) > 0$ for all $N\ge 1$ and that the set of variables $V_N = var(S_N)/N$, $N\ge 1$, is stochastically bounded. Finally, let $E[X_i] = \mu > 0$ (implying that the null hypothesis \eqref{eq:H0a}--\eqref{eq:H0b} is violated).
    Then 
        $
        \Pr\left\{\max_{n=1}^N S_n > z_{1-\alpha/2}\sqrt{N V_N}\right\} \to 1$ 
        as $N\to\infty$
\end{theorem}

\section{Validation Using Real-World Data: a Semi-Synthetic Experiment}\label{sec:validation}

The best possible real-world validation would be using data from a very large number of A/B tests (some with statistically significant results and some with neutral outcomes under high power). Having those data, one could apply the compared sequential tests retrospectively and measure their false detection rates and sensitivity (power). As we do not have access to a sufficient number of past A/B tests, we chose a different approach and conducted a semi-synthetic study, in which we took 
a public real-world data set (``Online Retail'' dataset \cite{Che:15}), randomly assigned users in that data set to control and test, computed the cumulative difference in the metric of interest (revenue) between control and treatment for each assignment replication, and measured the associated detection rates. (For power assessment, we also artificially decreased the order values considering multiple effect sizes, see Section~\ref{subsec:power} below.)

The code implementing our experiments 
is openly available in the git repository 
\cite{KGu:24}.

Here are the methods we compared.

\noindent\textbf{YEAST} The proposed sequential method using the significance boundary \eqref{eq:constantThreshold} as presented in Section~\ref{sec:sst}.

\noindent\textbf{mSPRT} The mixture sequential probability ratio test \cite{Robins1970,Lindon22}. We set the tuning parameter of the method to 11, 25, and 100 and denote the corresponding versions as mSRTphi11, mSRTphi25, and mSRTphi50.

\noindent\textbf{GAVI} The generalization of the always valid inference, as proposed in~\cite{Howard2021}. As in~\cite{SSe:23}, we set the numerator of parameter $\rho$ of the method to 250, 500, and 750 and denote the corresponding instances of the method by GAVI250, GAVI500, and GAVI750, respectively. We also included GAVI with the tuning parameter $\rho$ set to 10,000 (the default setting used by Eppo).

\noindent\textbf{LanDeMetsOBF} The GST method \cite{GDe:83} with the O'Brien-Fleming alpha-spending function \cite{BFl:79}. For computational reasons, we constructed the boundary using 100 interim checkpoints and then extended it in a piecewise-constant manner to support continuous monitoring.

\noindent\textbf{SeqC2ST-QDA} The sequential predictive test from \cite{PRa:23} with the online Newton step (ONS) strategy for selecting betting fractions. We used QDA (quadratic discriminant analysis) as a classification model for this method.

\subsection{Type-I Error (False Detection Rate)}\label{subsec:fdr}

The public dataset \cite{Che:15} consists of all transactions that occurred between 2010-12-01 and 2011-12-09 on a UK-based online store. We used the data up until 2011-11-30 to have 12 complete months and dropped orders with missing customer identifiers. The resulting dataset had 21,269 orders. We split it into two halves: the first 6 months were used for estimating parameters ($N$ and $V_N$) and the latter 6 months - for validation. Using the validation period, we randomly assigned customers to control and treatment 100,000 times\footnote{
We generated the assignments in parallel using 13 processes. The random seed of the $i$-th process was set to $i$. Each process but the last made 7692 replications while the last one made 7696 so the total number of replications was 100,000. The simulations take ~15 min to run on a laptop with an Apple M3 Pro CPU and 18 GB RAM.}, computed the cumulative difference in the revenue between control and treatment for each assignment replication, applied the sequential tests to monitor the cumulative difference, and measured the frequency of detections over the replications.
Since no actual treatment was applied to the replicated treatment groups, the null hypothesis was true, and the computed rates were false detection rates.

It is typical of online transaction data to contain outliers (e.g. large wholesale customers that can easily skew the monitored metric difference in favour of one of the groups). Therefore, before replicating the assignments, we removed outliers via revenue capping at the 99.9th percentile of the total revenue generated by a customer. The capping was applied progressively as follows: a given customer was only monitored while his/her revenue stayed within the cap and all subsequent events \emph{from that customer} were discarded.

The measured detection rates are reported in Table~\ref{tbl:real-world-experiment-online-retail} together with the associated confidence intervals (computed with Bonferroni correction). Detection rates within one percentage point of the nominal significance level (set to 5\%) are shown in bold.

As was mentioned in Section~\ref{subsec:setup}, we employed the sequential tests with two different variance estimates. One of the two assumed that the increments $X_i$ of the monitored difference-in-sum are independent. The other was an estimator robust to cluster serial correlation, that is - to the correlation between increments coming from the same user in our case.\footnote{We used the \texttt{vcovCL} function of the \texttt{sandwich} R package to compute the robust variance estimate. To reduce the sparsity in the time dimension, we summed the data (revenue increments) by user-hour before computing the estimator. See the linked repository \cite{KGu:24} for details.} The results corresponding to the use of the two estimators are reported in columns `non-robust` and `robust` of Table~\ref{tbl:real-world-experiment-online-retail}, respectively.\footnote{SeqC2ST-QDA does not take an estimate of the increment variance as input hence there is only one result for this method.}

Two observations from our experiments 
are in order. Firstly, the proposed method (YEAST) with the robust variance estimator was the only method to demonstrate accurate false detection rate control 
(i.e. with the detection rate staying within one percentage point of the nominal significance level). Secondly, when the non-robust estimator was used, the detection rate was considerably inflated. This highlights the importance of accounting for the correlation in the events generated by the same user when estimating the variance.

\begin{table}[ht]
    \caption{False Detection Rate }\label{tbl:real-world-experiment-online-retail}
    \centering
    \begin{small}
    \setlength{\tabcolsep}{4pt}
    \begin{tabular}{llcc}
  \hline
 & method & robust var. est. & non-robust var. est.\\ 
  \hline
  1 & YEAST & \textbf{0.044 ± 0.002} & 0.167 ± 0.004 \\ 
  2 & mSPRT100 & 0.028 ± 0.002 & 0.159 ± 0.004 \\
  3 & mSPRT11 & 0.017 ± 0.001 & 0.127 ± 0.003 \\
  4 & mSPRT25 & 0.022 ± 0.001 & 0.141 ± 0.003 \\ 
  5 & GAVI250 & 0.024 ± 0.002 & 0.148 ± 0.003 \\
  6 & GAVI500 & 0.027 ± 0.002 & 0.156 ± 0.004 \\ 
  7 & GAVI750 & 0.029 ± 0.002 & 0.160 ± 0.004 \\ 
  8 & GAVI10K & 0.028 ± 0.002 & 0.160 ± 0.004 \\
  9 & LanDeMetsOBF & 0.026 ± 0.006 & 0.133 ± 0.011\\
  10 & SeqC2ST-QDA & \multicolumn{2}{c}{0.102 ± 0.003}\\
   \hline
\end{tabular}
\end{small}
\end{table}

\subsection{Type-II Error (Power) Assessment}\label{subsec:power}

To assess power, we conducted the same simulations as in Section~\ref{subsec:fdr} but additionally decreased the values of order values belonging to the treatment group in each replication. The relative decrease was equal to 5, 10, and 20\%. 

To evaluate the sensitivity of the tests, we included an additional baseline: the standard non-sequential Student's t-test, applied once at the end of each experiment replication using the full dataset. This test serves as a benchmark because it is widely used in practice and represents the level of power achievable when no early stopping is employed (and thus no adjustments are needed to control the false detection rate).

The reported results are for the robust variance estimate. The detection rates can be found in Table~\ref{tbl:semi-synthetic-power}. 
As can be seen, the proposed method (YEAST) 
is the only method that was not underpowered relative to the standard (non-sequential) Student's t-test. 

\begin{table}[ht]
    \caption{Semi-synthetic Experiment: Power (Online Retail)}\label{tbl:semi-synthetic-power}
    \centering
    \begin{small}
    \setlength{\tabcolsep}{4pt}
\begin{tabular}{rlllll}
  \hline
 & method & \multicolumn{4}{c}{relative decrease in the order value}\\
 &  & 0.05 & 0.1 & 0.2 & 0.5\\ 
  \hline
 & \textit{Non-seq. ttest} & \textit{0.115 ± 0.004} & \textit{0.253 ± 0.005} & \textit{0.686 ± 0.005} & \textit{1.000 ± 0.000}\\ 
  1 & YEAST & \textbf{0.116 ± 0.004} & \textbf{0.250 ± 0.005} & \textbf{0.678 ± 0.005} & \textbf{1.000 ± 0.000} \\ 
  2 & mSPRT100 & 0.006 ± 0.001 & 0.018 ± 0.002 & 0.139 ± 0.004 & 0.998 ± 0.001\\ 
  3 & mSPRT011 & 0.002 ± 0.001 & 0.007 ± 0.001 & 0.072 ± 0.003 & 0.993 ± 0.001\\ 
  4 & mSPRT025 & 0.003 ± 0.001 & 0.010 ± 0.001 & 0.093 ± 0.003 & 0.996 ± 0.001\\ 
  5 & GAVI250 & 0.003 ± 0.001 & 0.012 ± 0.001 & 0.107 ± 0.003 & 0.997 ± 0.001 \\ 
  6 & GAVI500 & 0.005 ± 0.001 & 0.017 ± 0.001 & 0.131 ± 0.004 & 0.998 ± 0.001\\ 
  7 & GAVI750 & 0.006 ± 0.001 & 0.019 ± 0.002 & 0.146 ± 0.004 & 0.999 ± 0.000\\ 
  8 & GAVI10K & 0.012 ± 0.001 & 0.038 ± 0.002 & 0.232 ± 0.005 & 1.000 ± 0.000\\ 
  9 & LanDeMetsOBF & 0.076 ± 0.003 & 0.180 ± 0.004 & 0.584 ± 0.005 & 1.000 ± 0.000\\
  10 & SeqC2ST-QDA & 0.091 ± 0.003 & 0.090 ± 0.003 & 0.094 ± 0.003 & 0.107 ± 0.003\\
   \hline
\end{tabular}
\end{small}
\end{table}

\section{Limitations}\label{sec:limitations}

We saw in Section
\ref{sec:validation} that despite its simplicity, the proposed monitoring procedure demonstrated effectiveness in controlling the type-I error and a higher power relative to existing SOTA approaches. Yet there are scenarios where the underlying null hypothesis of no effect on the data generation process is too strong, potentially leading to some ``undesirable sensitivity''. In other words, sometimes the treatment can alter the data generation process but have a neutral average effect on the metric of interest. We can think of (at least) two such scenarios.

\textit{Situations where the treatment brings multiple changes to the data generation process and those changes counter-balance each other leading to a neutral average effect on the metric.} For example, the treatment can make users place fewer orders but increase the average order value at the same time so the two effects neutralise each other and the revenue per user stays intact.

\textit{Situations where the treatment changes the variance of the event outcome, leaving the average unaffected.} For example, the treatment can increase the variability in the order values without changing their average.

If in a given practical application, having sensitivity to one of the scenarios above is highly undesirable, it is safer to adhere to alternative sequential testing methods (e.g. \cite{Howard2021,ham2023designbased}).

Note that none of the problematic two scenarios can materialise when the event outcome is constant (i.\,e. the metric of interest simply counts events). In addition, the first scenario cannot hold when the treatment cannot affect the number of events, for example, when the events correspond to incoming patients in a medical study.

From a theoretical perspective, the amount of oversensitivity of the proposed method in the above two situations would depend on how much the median of the “residual sum” (or the sum of the remaining increments until the end of monitoring) deviates from zero. Intuitively, in the early steps of monitoring, this should not be a problem because the number of remaining increments is large and their sum’s distribution would be approximately symmetric due to the Central Limit Theorem. Yet, closer to the end of monitoring, this deviation can be considerable. Mitigation strategies can include stopping the monitoring early or setting the threshold as if the monitoring is to be run for longer than in reality. The effectiveness of those mitigation strategies and the tradeoff with the power would depend on how quickly the distribution of increment sums converges to a normal distribution.

We see studying the behaviour of our proposed method in situations mentioned in this section and seeking mitigation strategies as a large topic for future research.

It is worth emphasising that in some situations it can be desirable to capture all relevant potential defects (beyond the effect on the mean)~\cite{Lindon22}. For example, an increased outcome variance can indicate that the effect of the treatment is heterogeneous and some important subgroups of the population can be negatively affected (even though the overall mean stays the same). In such scenarios, the extra sensitivity mentioned in this section is not really a limitation but an advantage instead.

\section{Conclusion}\label{sec:conclusion}

This paper proposes a novel sequential test for continuous experiment monitoring. The proposed test supports both discrete and real-valued metrics. Its effectiveness is demonstrated in an empirical study (a semi-synthetic experiment based on real-world data).
We hope that the proposed method will enable a wider use of sequential testing for online evaluation.

\begin{ack}
The authors are very grateful to Evan Miller for insightful comments and his excellent blog \cite{Mi:15}, which sparked the idea for this research.

This research was carried out while all the authors worked at Zalando. The paper does not represent, imply, or establish any form of partnership, joint venture, or collaboration --- whether formal or informal --- between the organizations with which the authors are affiliated at present.
\end{ack}

\section*{Appendix}

\appendix

\section{Proofs and Auxiliary Statements}\label{app:proofs}

\begin{lemma}\label{lem:symmetry}
    Let $Y_i\colon\Omega\mapsto\R$ and $W_i\colon\Omega\mapsto\{0,\,1\}$, $i=1,\,\ldots,\,N$, be random variables, and let $S_n$, $n\ge 1$, be the running sum defined by \eqref{eq:X=}--\eqref{eq:Sn}. Suppose that Assumption~1 and the null hypothesis \eqref{eq:H0a}--\eqref{eq:H0b} hold. 
    
    Then for any $N \ge 1$ and any $k=1,\,\ldots,\,N$ the distribution of $S_N - S_k$ conditional on $S_1,\,\ldots,\,S_k$ is symmetric around 0.
\end{lemma}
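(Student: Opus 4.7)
The plan is a sign-flip argument leveraging $H_0(a)$ (outcome--assignment independence) together with Assumption~1 and $H_0(b)$ (subject-level Bernoulli$(1/2)$ assignments). First I would condition on the outcome vector $(Y_1,\ldots,Y_N)$; this is legitimate by $H_0(a)$ and reduces the claim to showing, for fixed $y_i$'s, that $S_N - S_k$ is conditionally symmetric about zero given $(X_1,\ldots,X_k)$, where $X_i = \epsilon_i y_i$ and $\epsilon_i := 1 - 2 W_i \in \{-1,+1\}$. Under Assumption~1, the subject-level signs $\tilde\epsilon_j := 1 - 2V_j$ are iid Rademacher, so their joint law is invariant under independent sign flips on any subset of subjects; writing $\epsilon_i = \tilde\epsilon_{\sigma(i)}$ for the subject $\sigma(i)$ that generated event $i$, this invariance transfers to the event level.

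Next I would execute a tail-only sign flip: negate $\tilde\epsilon_j$ exactly for those subjects whose events lie entirely in $\{k+1,\ldots,N\}$. This negates precisely the corresponding terms of $S_N - S_k$ while leaving $(X_1,\ldots,X_k)$ intact, so the joint invariance of the $\tilde\epsilon_j$'s yields
$$(X_1,\ldots,X_k,\,S_N - S_k)\stackrel{d}{=} (X_1,\ldots,X_k,\,-(S_N - S_k)).$$
Since $(S_1,\ldots,S_k)$ is a measurable function of $(X_1,\ldots,X_k)$, the desired conditional symmetry of $S_N - S_k$ given $(S_1,\ldots,S_k)$ follows immediately.

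The main obstacle I anticipate is the handling of subjects whose events straddle the cut index $k$: information about their $\tilde\epsilon_j$ already leaks into $(X_1,\ldots,X_k)$, so the naive tail-only flip does not preserve the conditioning. The cleanest resolution I can envision is to condition additionally on the head contributions of the straddling subjects, observe that for each such subject the tail contribution conditional on its head contribution is symmetric about zero by the remaining Rademacher freedom on that subject, and then combine the subject-wise symmetries via independence across subjects to deduce the global conditional symmetry. If the proof instead implicitly appeals to the stronger independence-of-increments setup (as in Assumption~2 of Section~\ref{sec:progressive}), then this straddling step is vacuous and the argument becomes the clean tail-only flip outlined above.
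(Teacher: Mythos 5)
Your overall strategy---a reflection argument on the assignment signs of the tail events $k+1,\ldots,N$---is essentially the same as the paper's: the paper partitions $\Omega$ by the tail sign patterns $A_\tau$, pairs each $\tau$ with its negation $\bar\tau$, and uses $\Pr(A_\tau)=\Pr(A_{\bar\tau})$ together with the outcome--assignment independence \eqref{eq:H0a} to swap $B$ and $\bar B$; your subject-level Rademacher flip expresses the same symmetry one level down. Your reduction to fixed outcomes $(y_1,\ldots,y_N)$ and the clean tail-only flip are sound whenever no subject's events straddle the cut index $k$.

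The genuine gap is your treatment of straddling subjects, and the patch you sketch cannot work as stated. For a subject $j$ with events on both sides of $k$ and a nonzero head contribution, conditioning on that head contribution determines $\tilde\epsilon_j$ exactly: there is no ``remaining Rademacher freedom,'' and the subject's tail contribution becomes a deterministic nonzero constant rather than a conditionally symmetric variable. Concretely, take a single subject generating events $1$ and $2$ with $Y_1=Y_2=1$, $N=2$, $k=1$; then $W_1=W_2$, so $S_2-S_1=X_2=X_1=S_1$, and conditionally on $S_1$ the increment $S_2-S_1$ is a point mass at $\pm 1$, not symmetric about $0$, even though Assumption~1 and \eqref{eq:H0a}--\eqref{eq:H0b} all hold. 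This shows the straddling case is not a removable technicality under the stated hypotheses. You have in fact put your finger on exactly the step the paper's own proof elides: in \eqref{eq:swap} the unconditional identity \eqref{eq:PAtau} is applied inside the conditional probability $\Pr_k(\cdot)=\Pr(\cdot\mid S_1,\ldots,S_k)$, and the factorisation of $\Pr_k(\bar B\cap A_\tau)$ needs the tail signs to be conditionally independent of $(S_1,\ldots,S_k)$---both of which fail precisely in the straddling configuration above. A correct proof needs an additional hypothesis ruling out head--tail dependence of the assignment signs (one event per subject, event-level randomisation, or Assumption~2), in which case your tail-only flip goes through verbatim.
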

\begin{proof}
    Fix any $N\ge 2$ and any $k=1,\,\ldots,\,N-1$ (for $N=1$ or $k=N$ the statement is trivial). Let $\mathcal{T}=\{-1,\,+1\}^{N-k}$ be the set of all possible combinations of $-1$ and $+1$ of length $N-k$. Furthermore, for any $\tau\in\mathcal{T}$, let $\bar\tau\in\mathcal{T}$ be the combination obtained by negating every element of $\tau$, i.\,e. $\bar\tau_j=-\tau_j$ for all $j=1,\,\ldots,\,N-k$. Next, define
    $$
        A_{\tau} = \{1-2W_i = \tau_{i-k}\ \forall\,i=k+1,\,\ldots,\,N\},\quad \tau\in\mathcal{T}.
    $$
    Note that 
    \begin{equation}\label{eq:=Omega}
        A_{\tau}\cap A_{\tau'} = \emptyset\ \forall\,\tau,\,\tau' \in \mathcal{T} \text{ and } \cup_{\tau\in\mathcal{T}} A_{\tau} = \cup_{\tau\in\mathcal{T}} A_{\bar\tau} = \Omega.
    \end{equation}
    Also, Assumption~1 together with the null hypothesis imply that
    \begin{equation}\label{eq:PAtau}
        \Pr(A_\tau\mid Y_1,\,W_1,\,\ldots,\,Y_k,\,W_k) = \Pr(A_{\bar\tau}\mid Y_1,\,W_1,\,\ldots,\,Y_k,\,W_k)\quad\textup{a.s.}\quad \forall\,\tau\in\mathcal{T}.
    \end{equation}
    Finally, fix an arbitrary Borel set $Q\subset\R$ and define
    $$
        B = \{S_N - S_k \in Q\},\quad \bar B = \{S_k - S_N \in Q\}.
    $$
    Denoting the conditional probability $\Pr(\cdot\mid Y_1,\,W_1,\,\ldots,\,Y_k,\,W_k)$ by $\Pr_k$, we will now derive the key relationship for the proof of the lemma. Specifically, \eqref{eq:H0a} and \eqref{eq:PAtau} imply that for any $\tau\in\mathcal{T}$ we have\footnote{In the second and next to the last transitions, we leverage the fact that if two collections of random variables are independent, they remain conditionally independent after conditioning on any subset of variables from either or both collections.}
    \begin{eqnarray}\label{eq:swap}
        \Pr\nolimits_k(\bar B\cap A_{\tau}) &=&\Pr\nolimits_k\left(\left\{-\sum_{i=k+1}^N \tau_{i-k} Y_i \in Q\right\}\cap A_{\tau}\right)
        \nonumber\\ &\stackrel{\eqref{eq:H0a}}{=}& \Pr\nolimits_k\left\{-\sum_{i=k+1}^N \tau_{i-k} Y_i \in Q\right\} \Pr\nolimits_k(A_{\tau})
        \nonumber\\ &\stackrel{\eqref{eq:PAtau}}{=}& \Pr\nolimits_k\left\{-\sum_{i=k+1}^N \tau_{i-k} Y_i \in Q\right\} \Pr\nolimits_k(A_{\bar\tau})
        \nonumber\\ &=& \Pr\nolimits_k\left\{\sum_{i=k+1}^N \bar\tau_{i-k} Y_i \in Q\right\} \Pr\nolimits_k(A_{\bar\tau})
        \nonumber\\ &\stackrel{\eqref{eq:H0a}}{=}& \Pr\nolimits_k\left(\left\{\sum_{i=k+1}^N \bar\tau_{i-k} Y_i \in Q\right\}\cap A_{\bar\tau}\right)
        \nonumber\\&=& \Pr\nolimits_k(B\cap A_{\bar\tau})\quad \textup{a.s.}
    \end{eqnarray}
    From \eqref{eq:swap} and \eqref{eq:=Omega}, it follows that 
    \begin{equation*}
        \Pr\nolimits_k(\bar B) \stackrel{\eqref{eq:=Omega}}{=} \sum_{\tau\in\mathcal{T}} \Pr\nolimits_k(\bar B \cap A_{\tau}) \stackrel{\eqref{eq:swap}}{=} 
        \sum_{\tau\in\mathcal{T}} \Pr\nolimits_k(B \cap A_{\bar\tau}) \stackrel{\eqref{eq:=Omega}}{=} \Pr\nolimits_k(B)\quad \textup{a.s.},
    \end{equation*}
    which by the tower property implies that
    \begin{equation}\label{eq:symmetry}
        \Pr(\bar B\mid S_1,\,\ldots,\,S_k) = \Pr(B\mid S_1,\,\ldots,\,S_k)\quad \textup{a.s.}
    \end{equation}    
    Since the fixed Borel set $Q$ was arbitrary, \eqref{eq:symmetry} proves that conditionally on $S_1,\,\ldots,\,S_k$, the distribution of $S_N - S_k$ and $S_k - S_N$ is the same or, equivalently, the distribution of $S_N - S_k$ is symmetric about zero.
\end{proof}

\begin{lemma}\label{lem:levy}
    Suppose that the assumptions of Lemma~\ref{lem:symmetry} hold.

    Then for any $N\ge 1$ and any $b>0$, we have
    \begin{equation}\label{eq:levy1}
            \Pr\left\{\max_{n=1}^N S_n \ge b\right\} \le 2 \Pr\{S_N \ge b\},
    \end{equation}
    and
    \begin{equation}\label{eq:levy2}
            \Pr\left\{\max_{n=1}^N |S_n| \ge b\right\} \le 2 \Pr\{|S_N| \ge b\}.
    \end{equation}    
\end{lemma}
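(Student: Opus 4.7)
The plan is to use a first-passage time decomposition combined with the conditional symmetry established in Lemma~\ref{lem:symmetry}. This is essentially the classical Levy inequality argument, adapted to our setting in which the increments are only conditionally symmetric (rather than i.i.d.\ symmetric).

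For the one-sided bound~\eqref{eq:levy1}, I define the first-passage time $\tau = \min\{n \in \{1,\ldots,N\} : S_n \ge b\}$, with $\tau = \infty$ on the complementary event. Then $\{\max_{n=1}^N S_n \ge b\}$ is the disjoint union over $k = 1,\ldots,N$ of the events $\{\tau = k\}$, each of which is $\sigma(S_1,\ldots,S_k)$-measurable. On $\{\tau = k\}$ we have $S_k \ge b$, so $S_N - S_k \ge 0$ already forces $S_N \ge b$. By Lemma~\ref{lem:symmetry}, the conditional distribution of $S_N - S_k$ given $S_1,\ldots,S_k$ is symmetric around zero, hence $\Pr(S_N - S_k \ge 0 \mid S_1,\ldots,S_k) \ge 1/2$ almost surely. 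Combining these ingredients via the tower property,
\begin{align*}
    \Pr\{S_N \ge b\} &\ge \sum_{k=1}^N \Pr\bigl(\{\tau = k\} \cap \{S_N - S_k \ge 0\}\bigr) \\
    &= \sum_{k=1}^N \E\bigl[\mathbf{1}_{\{\tau=k\}}\,\Pr(S_N - S_k \ge 0 \mid S_1,\ldots,S_k)\bigr] \\
    &\ge \tfrac{1}{2}\sum_{k=1}^N \Pr\{\tau = k\} = \tfrac{1}{2}\Pr\Bigl\{\max_{n=1}^N S_n \ge b\Bigr\},
\end{align*}
which rearranges to~\eqref{eq:levy1}.

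For the two-sided bound~\eqref{eq:levy2}, I would repeat the argument with $\tau = \min\{n : |S_n| \ge b\}$, but split each event $\{\tau = k\}$ according to the sign of $S_k$. On $\{\tau = k,\,S_k \ge b\}$ the increment condition $S_N - S_k \ge 0$ forces $S_N \ge b$; on $\{\tau = k,\,S_k \le -b\}$ the condition $S_N - S_k \le 0$ forces $S_N \le -b$; in both cases $|S_N| \ge b$. Conditional symmetry again gives each of the two relevant one-sided tail probabilities a lower bound of $1/2$, and summing over $k$ and the two sub-events yields $\Pr\{|S_N| \ge b\} \ge \tfrac{1}{2}\Pr\{\max_{n=1}^N |S_n| \ge b\}$.

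The main subtlety to handle carefully is the measurability bookkeeping: I need $\{\tau = k\}$ to lie in $\sigma(S_1,\ldots,S_k)$ so that the indicator can be pulled outside the conditional expectation before Lemma~\ref{lem:symmetry} is invoked. Once that is set up cleanly, the rest of the computation is the standard reflection-principle step and presents no further difficulty.
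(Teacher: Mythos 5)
Your proof is correct and follows essentially the same route as the paper: both reduce the claim to the fact (from Lemma~\ref{lem:symmetry}) that $S_N - S_k$ has conditional median zero given $S_1,\ldots,S_k$, the paper then citing the generalised Levy inequality of Lo\`eve while you reprove that inequality from scratch via the standard first-passage decomposition and the bound $\Pr(S_N - S_k \ge 0 \mid S_1,\ldots,S_k) \ge 1/2$. The only difference is that your version is self-contained where the paper's is a one-line citation; the measurability of $\{\tau = k\}$ with respect to $\sigma(S_1,\ldots,S_k)$ and the sign split in the two-sided case are handled correctly.
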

\begin{proof}
    From Lemma~\ref{lem:symmetry}, it follows that for any $k=1,\,\ldots,\,N$, the distribution of $S_N-S_k$ conditional on $S_1,\,\ldots,\,S_k$ is symmetric about 0 and therefore the conditional median $\mu(S_N-S_k\mid S_1,\,\ldots,\,S_k)$ is zero. Then  \eqref{eq:levy1} and \eqref{eq:levy2} easily follow from the generalised Levy's inequalities \cite[Paragraph~29.1, A]{Loe:60}.
\end{proof}

\begin{proof}[Proof of Theorem~\ref{thm:sst}.]

This proof has two steps. First, we bound the chance of ``crossing the threshold'' at any moment during the monitoring by twice the probability of crossing it at the last moment. Secondly, we approximate the latter probability with the help of the Central Limit Theorem (noting that the tracked difference at the end of monitoring is the sum of a large number of increments). This allows us to construct the threshold appropriately, ensuring that the method's false detection rate is small enough.

    Fix an arbitrary $\gamma > 0$ and $\varepsilon > 0$.

   By the continuity of $\Phi$, there exists a sufficiently small $\delta > 0$ such that
   \begin{equation}\label{eq:continuity}
        \left| \Phi\left(\gamma\right) - \Phi\left(\gamma - \delta\right)\right| \le \frac\varepsilon 4.
   \end{equation}

   Furthermore, the null hypothesis implies that
   \begin{equation*}
        \E[X_i] = \E\left[(1-2W_i)Y_i\right] \stackrel{\eqref{eq:H0a}}{=} (1 - 2 \Pr\{W_i = 1\})\E[Y_i] \stackrel{\eqref{eq:H0b}}{=} 0
    \end{equation*}
    for all $i\ge 1$. Therefore, $E[S_N]=0$ for all $N\ge 1$ and by the Central Limit Theorem we have that
    \begin{equation}\label{eq:clt}
        \left|\Pr\left\{S_N / \sqrt{var(S_N)} \le \gamma - \delta\right\} - \Phi\left(\gamma - \delta\right)\right| \le  \frac \varepsilon  4
     \end{equation}
    for all sufficiently large $N$. \eqref{eq:levy1}, relations \eqref{eq:continuity}--\eqref{eq:clt} give     \begin{eqnarray*}         
        \Pr\left\{\max_{n=1}^N S_n > \gamma\sqrt{NV_N}\right\} 
        &=& \Pr\left\{\max_{n=1}^N S_n > \gamma\sqrt{var(S_N)}\right\} \\ 
        &\le& \Pr\left\{\max_{n=1}^N S_n \ge \gamma\sqrt{var(S_N)}\right\} \\ 
        &\stackrel{\eqref{eq:levy1}}{\le}& 2 \Pr\left\{S_N \ge \gamma\sqrt{var(S_N)}\right\} \\ 
        &=&
        2 \Pr\left\{S_N/\sqrt{var(S_N)} \ge \gamma\right\} \\
        &\le&
        2 \Pr\left\{S_N/\sqrt{var(S_N)} > \gamma - \delta\right\} \\
        &=& 
        2 \left(1 - \Pr\left\{S_N/\sqrt{var(S_N)} \le \gamma - \delta\right\}\right) \\ &=&
        2 \left(1 - \Phi\left(\gamma\right)\right)
        + 2 \left(\Phi\left(\gamma\right) - \Phi\left(\gamma - \delta\right)\right) 
        \\ && {}+ 2 \left(\Phi\left(\gamma - \delta\right) - \Pr\left\{S_N/\sqrt{var(S_N)} \le \gamma - \delta\right\}\right)
        \\
        &\stackrel{\eqref{eq:continuity},\,\eqref{eq:clt}}{\le}&
        2\left(1 - \Phi(\gamma)\right) + \frac \varepsilon 2 + \frac \varepsilon 2 \\
        &=&
        2\left(1 - \Phi(\gamma)\right) + \varepsilon
    \end{eqnarray*}
    for all large enough $N$. In particular, for any $\alpha\in(0,\,1)$ and any $\varepsilon > 0$ we have that
    $$
        \Pr\left\{\max_{n=1}^N S_n > z_{1-\alpha/2}\sqrt{NV_N}\right\} \le 2\left(1-\Phi\left(z_{1-\alpha/2}\right)\right) + \varepsilon = \alpha + \varepsilon
    $$
    for all sufficiently large $N$.
    This completes the proof.
\end{proof}

\begin{proof}[Proof of Theorem~\ref{thm:power}.]

The idea of the proof is to note that the probability of detecting the effect at any moment during the monitoring is at least as high as the probability of detecting it at the very end. Then we leverage the Central Limit Theorem to show that (when the treatment effect exists) the latter probability approaches 1 as the monitoring duration increases.

    Let $\xi_N$ stand for the centered and standardised sum $(S_N - N\mu)/var(S_N)$. By the Central Limit Theorem $\xi_N\stackrel{d}{\to} \mathcal{N}(0,\,1)$ and therefore $\xi_N$ is stochastically bounded (see, for example, \cite[Theorem~2.4]{Va:00}). 
    
    Now, fix an arbitrary $\varepsilon > 0$. From the stochastical boundedness of $\xi_N$ and $V_N$, it follows that there exist $M_1$ and $M_2$ such that for all sufficiently large $N$ we have
    \begin{equation}\label{eq:xiBounded}
        \Pr\left\{|\xi_N| \le M_1\right\} \ge 1-\frac\varepsilon 2
    \end{equation}
    and
    \begin{equation}\label{eq:VBounded}
        \Pr\left\{|V_N| \le M_2\right\} \ge 1 -\frac\varepsilon 2.
    \end{equation}
    From \eqref{eq:xiBounded}--\eqref{eq:VBounded}, it follows that
    \begin{equation*}
        \Pr\left\{\xi_N + \mu\sqrt{N}/\sqrt{V_N} \ge -M_1 + \mu\sqrt{N}/ \sqrt{M_2}\right\} \ge 1 - \varepsilon
    \end{equation*}
    for all large enough $N$. But then for any sufficiently large $N$, we have that
    \begin{equation}\label{eq:ge1minuseps}
        \Pr\left\{\xi_N + \mu\sqrt{N}/\sqrt{V_N} > z_{1-\alpha/2}\right\} \ge 1 - \varepsilon.
    \end{equation}
    because $\left(-M_1 + \mu\sqrt{N} / \sqrt{M_2}\right) 
    \to\infty$ as $N\to\infty$.
    Then we can derive that
    \begin{eqnarray*}
        \Pr\left\{\max_{n=1}^N S_n > z_{1-\alpha/2}\sqrt{N V_N}\right\} &\ge& \Pr\left\{S_N > z_{1-\alpha/2}\sqrt{N V_N}\right\} \\ 
        &=&
        \Pr\left\{\xi_N + \mu\sqrt{N}/\sqrt{V_N} > z_{1-\alpha/2}\right\} \\
        &\stackrel{\eqref{eq:ge1minuseps}}{\ge}& 1 - \varepsilon
    \end{eqnarray*}
    for all sufficiently large $N$. Since $\varepsilon$ was arbitrary, this completes the proof.
\end{proof}

\section{Experiments using synthetic data}\label{sec:synthetic}

Our simulation setup follows a recent study~\cite{SSe:23}. To the best of our knowledge, this is the only existing comparative study of sequential tests. We extended its implementation \cite{SSe:23:code} with the method proposed in Section~3 of the paper (YEAST).

As in~\cite{SSe:23:code}, we generate $N$ observations from the control, $Y_i^c$, $i=1,\,\ldots,\,N$, and $N$ observations from treatment, $Y_i^t$, with $N$ set to 500. Both $Y_i^c$ and $Y_i^t$ are generated as IID random variables. The effect size (on the mean) took values 0.0, 0.1, 0.2, 0.3, and 0.4. The $X_i$ variables (increments in the metric difference between the two groups) were computed as $X_i=Y_i^c - Y_i^t$, $i=1,\,\ldots,\,N$.

We conducted two simulation experiments. The first experiment repeats the simulation study from~\cite{SSe:23} but adds YEAST to the comparison.
In the second set of simulations, we explored the behaviour of the compared methods when the increments are non-normal.

In all experiments, the target significance level was set at 5\% and the number of replications was set to 100,000.

The simulation code for these experiments can be found in the associated git repository \cite{KGu:24}.

\subsection{Simulation Results}\label{subsec:main}

In this experiment, observations $Y_i^c$ and $Y_i^t$ were drawn from normal distributions with parameters $(1,\,1)$ and $(1 + \xi,\,1)$, respectively. The effect size $\xi$ took values 0.0, 0.1, 0.2, 0.3, and 0.4. The random generator seed was set to 8163 (as in~\cite{SSe:23, SSe:23:code}).

In the following we present the list of methods we compared.

\noindent\textbf{YEAST} The proposed sequential method.

\noindent\textbf{YEASTnv\{K\}} (with $K=80,\,90,\,110,\,120$) are the instances of the proposed method with the product $NV_N$ misestimated by a factor of $0.8,\,0.9,\,1.1$ and $1.2$, respectively (ie the method is applied with $NV_N$ set at 10\% and 20\% below and above the true value). Since the alerting boundary of YEAST depends on the product of $N$ and $V_N$ we can study the effect of inaccuracies in their estimation together.

\noindent\textbf{mSPRT} The mixture sequential probability ratio test \cite{Robins1970,Lindon22}. We set the tuning parameter of the method to 11, 25, and 100 and denote the corresponding versions as mSRTphi11, mSRTphi25, and mSRTphi50.

\noindent\textbf{GAVI} The generalization of the always valid inference, as proposed in~\cite{Howard2021}. As in~\cite{SSe:23}, we set the numerator of parameter $\rho$ of the method to 250, 500, and 750 and denote the corresponding instances of the method by GAVI250, GAVI500, and GAVI750, respectively.

\noindent\textbf{LanDeMetsOBF} The GST method \cite{GDe:83} with the O'Brien-Fleming alpha-spending function \cite{BFl:79}. For computational reasons, we constructed the boundary using 100 interim checkpoints and then extended it in a piecewise-constant manner to support continuous monitoring.

\noindent\textbf{SeqC2ST-QDA} The sequential predictive test from \cite{PRa:23} with the online Newton step (ONS) strategy for selecting betting fractions. We used QDA (quadratic discriminant analysis) as a classification model for this method.

\noindent\textbf{Bonferroni} A naive approach using Bonferroni corrections.

All of the compared methods were employed in the continuous monitoring mode meaning that the check for significance was performed after each observation. 
In Appendix~\ref{app:extra} 
we report additional evaluation results for the case where YEAST was employed in the "discrete mode" (i.\,e. with a fixed number of interim significance checks).

For each experimental setting, we conducted 100,000 replications. Each replication can result in a detection or no-detection. A detection occurs when the respective test flags significance (at any point of the monitoring process). We compute the share of replications where a detection occurs. When the treatment does not have an effect this share is the so-called (empirical) false detection rate (or, synonymously, false positive rate, type-I error, or test ``size''). In settings where the treatment has an effect, this share is the (empirical) power. Table~\ref{tbl:mainExperiment} presents the measured false detection rate and power, along with their corresponding 95\% confidence intervals adjusted for multiple comparisons. 

The methods that keep the false detection rate below the nominal level of 5\% and have the highest power are shown in bold.

One can see from the table that YEAST demonstrated the highest power among the continuous monitoring approaches that did not inflate the false detection rate. It means that while it kept the false positive rate under control when there was no treatment effect, YEAST had the highest sensitivity among the compared approaches when the treatment had an effect on the metric of interest.

\begin{table}[ht]
\caption{Simulation Experiment: \\ False Detection Rate and Power}\label{tbl:mainExperiment}
\centering
\begin{small}
\begin{tabular}{rlrrrrr}
  \hline
  & effect size & 0.0 & 0.1 & 0.2 & 0.3\\
  & method &  & & & &\\ 
  \hline
  1 & YEAST & 0.047 ± 0.002 & \textbf{0.448 ± 0.005} & \textbf{0.902 ± 0.003} & \textbf{0.995 ± 0.001} \\
2 & YEASTn110 & 0.038 ± 0.002 & 0.408 ± 0.005 & 0.883 ± 0.003 & \textbf{0.994 ± 0.001} \\
3 & YEASTn120 & 0.030 ± 0.002 & 0.372 ± 0.004 & 0.864 ± 0.003 & 0.992 ± 0.001 \\
4 & YEASTn80 & 0.075 ± 0.002 & 0.534 ± 0.005 & 0.933 ± 0.002 & 0.997 ± 0.000 \\
5 & YEASTn90 & 0.059 ± 0.002 & 0.490 ± 0.005 & 0.918 ± 0.003 & 0.996 ± 0.001 \\
6 & mSPRT100 & 0.016 ± 0.001 & 0.235 ± 0.004 & 0.751 ± 0.004 & 0.976 ± 0.001\\
7 & mSPRT011 & 0.032 ± 0.002 & 0.249 ± 0.004 & 0.739 ± 0.004 & 0.972 ± 0.002\\
8 & mSPRT025 & 0.028 ± 0.002 & 0.260 ± 0.004 & 0.758 ± 0.004 & 0.976 ± 0.001\\
9 & GAVI250 & 0.025 ± 0.001 & 0.260 ± 0.004 & 0.763 ± 0.004 & 0.977 ± 0.001\\
10 & GAVI500 & 0.019 ± 0.001 & 0.245 ± 0.004 & 0.758 ± 0.004 & 0.977 ± 0.001\\
11 & GAVI750 & 0.014 ± 0.001 & 0.227 ± 0.004 & 0.744 ± 0.004 & 0.975 ± 0.001\\
12 & Bonferroni & 0.008 ± 0.001 & 0.067 ± 0.002 & 0.440 ± 0.005 & 0.877 ± 0.003\\
13 & LanDeMetsOBF & 0.054 ± 0.002 & 0.466 ± 0.005 & 0.927 ± 0.002 & 0.999 ± 0.000  \\
14 & SeqC2ST\_QDA & 0.022 ± 0.001 & 0.037 ± 0.002 & 0.092 ± 0.003 & 0.220 ± 0.004 \\
   \hline
\end{tabular}
\end{small}
\end{table}

In Section~\ref{app:savings} 
we additionally report sample (or, equivalently, time) savings that each of the methods produced on average (due to the early effect detection in an interim check).

\subsection{Non-Normal Data}

The derivation of YEAST involves the application of the Central Limit Theorem to sums of $X_i$ (increments in the metric difference between the two groups). For a fixed sample size, the quality of the normal approximation depends on the distribution of $X_i$. In this section, we explore the performance of YEAST in situations where the distribution of $Y_i^c$ and $Y_i^t$ is not normal. Specifically, we used two alternative distributions: Student's t which has heavier tales than the normal distribution and Gamma which is asymmetric. Student's t distribution had 3 degrees of freedom and was shifted by $\sqrt{3}$ for the control and by $\sqrt{3}(1+\xi)$ for the treatment. The shifting was done to maintain the same coefficient of variation as in Section~\ref{subsec:main}). The Gamma distribution had its shape parameter set to 1.0. The scale parameter equaled $2$ for the control and $2(1+\xi)$ for the treatment. The effect size $\xi$ took the same values as in the first simulation experiment: $\xi=0.0,\,0.1,\,0.2,\,0.3,\,0.4$. The random seed was set to 2023 for the simulations with the Student's t distribution and to 2024 for the simulations with the Gamma distribution. The two seeds were different to avoid dependence across the two sets of simulations.

The results are depicted in Figure~\ref{fig:power}. The first data point on each line corresponds to the case of no treatment effect and therefore represets the false detection rate. The remaining points represent the power for different treatment effect sizes. 
Similarly to the experiment with normal data, YEAST had a considerably higher power curve (both for the Gamma and the Student's t distribution cases) than other methods that did not inflate the false detection rate. 

\begin{figure}
\caption{Power Curves (under non-normal increment distributions)} \label{fig:power}
    \centering
    \includegraphics[width=\textwidth]{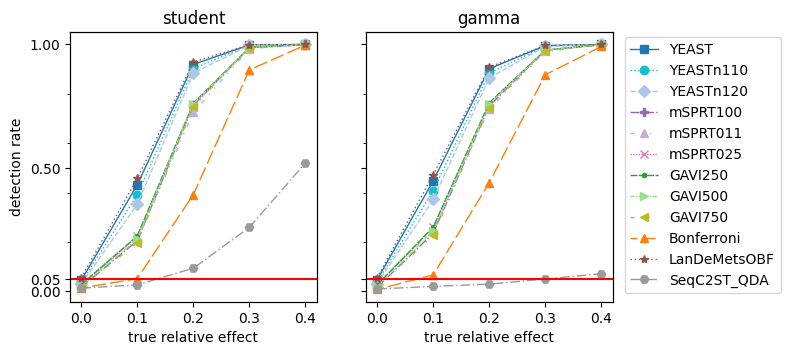}
\end{figure}

\section{Discrete Monitoring}\label{app:extra}

In this section, we report additional evaluation results for the case where YEAST was employed in the discrete mode (i.\,e. with only a fixed number of interim checks). The evaluation was performed against the same replications as in Section~\ref{subsec:main}. We again follow the protocol from~\cite{SSe:23} and perform 14, 28, 42, and 56 significance checks (spaced equally across the timeline). Since in these evaluations, we operate in a discrete setting, we were able to include the discrete baselines from~\cite{SSe:23} in the comparison. Namely, we compare against the following benchmark.

\noindent\textbf{GST} The group sequential test with alpha spending as proposed in~\cite{GDe:83}. The test performs a significance check after the arrival of each batch of observations. To schedule a prespecified number of checks it therefore needs an estimate of the total number of observations that would be collected during the experiment time frame. In the evaluations, we consider three different scenarios: when the sample size is estimated precisely (right number of checks), when the sample size is underestimated (leading to more checks that planned), and when the sample size is overstimated (leading to making fewer checks than planned). The respective entries in the evaluation table are referred to as \emph{GST}, \emph{GSToversampled}, and \emph{GSTundersampled}, respectively. The actual sample size was 500 and the respective sample size estimates for the three scenarios were 500, 250, and 750. In the case of oversampling (i.\,e. the sample size is underestimated), we apply the correction to the bounds proposed in ~\cite[pp. 78--79]{WBr:16}. We consider quadratic and cubic alpha-spending, the latter having the ``phi3'' prefix in the name.

We report the share of replications where the test detects an effect (i.\,e. the significance is detected in at least one of the interim checks). Table~\ref{tbl:appendixFPR} (``False Positive Rate'') reports this share for the case where no actual effect was present. All of the compared methods except oversampled versions of GST keep the false detection rate below the nominal level of 5\%. Table~\ref{tbl:appendixPower} (``Power'') contains the share of replications with a detection for the case where the treatment effect was set to 0.2 standard deviations. The methods with an inflated false detection rate were excluded from the power comparisons. The GST method with cubic alpha-spending demonstrated the highest power, closely followed by YEAST and GST with quadratic alpha-spending. We find it remarkable that our proposed method, despite supporting continuous monitoring, performed on par with the GST method in the discrete monitoring setting.

\begin{table}[H]
\caption{False Positive Rate}\label{tbl:appendixFPR}
\centering
\begin{tabular}{rlrrrrr}
  \hline
 & type & 14 & 28 & 42 & 56\\ 
  \hline
  1 & YEAST & 0.04 & 0.04 & 0.04 & 0.04\\
  2 & GST & 0.05 & 0.05 & 0.05 & 0.05\\ 
  3 & GSTphi3 & 0.05 & 0.05 & 0.05 & 0.05\\ 
  4 & GSToversampled & 0.07 & 0.07 & 0.07 & 0.08\\ 
  5 & GSToversampledphi3 & 0.09 & 0.10 & 0.10 & 0.07\\ 
  6 & GSTundersampled & 0.03 & 0.02 & 0.03 & 0.03\\ 
  7 & GSTundersampledphi3 & 0.01 & 0.01 & 0.01 & 0.01\\ 
   \hline
\end{tabular}
\end{table}

\begin{table}[H]
\caption{Power (under a treatment effect of 0.2 standard deviations)}\label{tbl:appendixPower}
\centering
\begin{tabular}{rlrrrrr}
  \hline
 & type & 14 & 28 & 42 & 56 & stream \\ 
  \hline
  1 & YEAST & 0.90 & 0.91 & 0.91 & 0.91 & 0.92 \\ 
  2 & GST & 0.90 & 0.90 & 0.90 & 0.89 & - \\ 
  3 & GSTphi3 & 0.93 & 0.92 & 0.93 & 0.93 & - \\ 
  4 & GSTundersampled & 0.83 & 0.82 & 0.82 & 0.82 & - \\ 
   \hline
\end{tabular}
\end{table}

\section{Sample/Time Savings}\label{app:savings}

The main benefit of sequential testing is the ability to stop the experiment early (once significance is flagged upon one of the interim checks). This allows saving time and, if the treatment is harmful, reduce the negative impact. Thus, the amount of savings that is generated by a sequential test on average is another important metric and we report the savings observed in the experiment from Section~\ref{subsec:main} in Table~\ref{tbl:mainExperimentSavings}. The savings are measured as follows: if a method identified the effect after the arrival of 10\% of the total number of observations, the associated sample (or, equivalently, time) savings would be 90\%.

\begin{table}[H]
\caption{Simulation Experiment: Sample/Time Savings, \%} \label{tbl:mainExperimentSavings}
\centering
\begin{tabular}{rlrrrrr}
  & effect size & 0.1 & 0.2 & 0.3 & 0.4\\
  & method &  & & & \\ 
  \hline
    1 & YEAST & 13 & 39 & 58 & 69 \\
    2 & mSPRTphi100 & 9 & 35 & 62 & 75 \\
    3 & mSPRTphi11 & 12 & 41 & 68 & {82} \\
    4 & mSPRTphi25 & 12 & 41 & 68 & 81 \\
    5 & GAVI250 & 12 & 40 & 67 & 80 \\
    6 & GAVI500 & 10 & 37 & 63 & 77 \\
    7 & GAVI750 & 8 & 34 & 60 & 74 \\
    8 &Bonferroni & 2 & 19 & 45 & 67 \\
    9 & LanDeMetsOBF & 14 & 41 & 60 & 70 \\
    10 & SeqC2ST-QDA & 3 & 6 & 14 & 27 \\
  \hline
\end{tabular}
\end{table}


\begin{thebibliography}{10}

\bibitem{Cam:11}
Jonah B.~Gelbach A.~Colin~Cameron and Douglas~L. Miller.
\newblock Robust inference with multiway clustering.
\newblock {\em Journal of Business \& Economic Statistics}, 29(2):238--249, 2011.

\bibitem{armitage1954sequential}
Peter Armitage et~al.
\newblock Sequential tests in prophylactic and therapeutic trials.
\newblock {\em Quarterly Journal of Medicine}, 23(91):255--74, 1954.

\bibitem{BKa:22}
Aur{\'e}lien~F. Bibaut, Nathan Kallus, and Michael Lindon.
\newblock Near-optimal non-parametric sequential tests and confidence sequences with possibly dependent observations.
\newblock 2022.

\bibitem{bross1952sequential}
I~Bross.
\newblock Sequential medical plans.
\newblock {\em Biometrics}, 8(3):188--205, 1952.

\bibitem{Che:15}
Daqing Chen.
\newblock {Online Retail}.
\newblock UCI Machine Learning Repository, 2015.
\newblock {DOI}: \url{https://doi.org/10.24432/C5BW33}.

\bibitem{dodge1929method}
Harold~French Dodge and Harry~G Romig.
\newblock A method of sampling inspection.
\newblock {\em The Bell System Technical Journal}, 8(4):613--631, 1929.

\bibitem{doob1953stochastic}
J.L. Doob.
\newblock {\em Stochastic Processes}.
\newblock Probability and Statistics Series. Wiley, 1953.

\bibitem{GDe:83}
K.~K. Gordon~Lan and David~L. Demets.
\newblock {Discrete sequential boundaries for clinical trials}.
\newblock {\em Biometrika}, 70(3):659--663, 12 1983.

\bibitem{Grunwald_deHeide_Koolen_2024_SafeTesting}
Peter Grünwald, Rianne de~Heide, and Wouter~M. Koolen.
\newblock Safe testing.
\newblock {\em Journal of the Royal Statistical Society. Series B: Statistical Methodology}, 86(5):1091--1128, 2024.

\bibitem{ham2023designbased}
Dae~Woong Ham, Iavor Bojinov, Michael Lindon, and Martin Tingley.
\newblock Design-based confidence sequences: A general approach to risk mitigation in online experimentation, 2023.

\bibitem{Howard2021}
Steven~R. Howard, Aaditya Ramdas, Jon McAuliffe, and Jasjeet Sekhon.
\newblock Time-uniform, nonparametric, nonasymptotic confidence sequences.
\newblock {\em The Annals of Statistics}, 49(2), apr 2021.

\bibitem{Bandit4}
Kevin Jamieson and Lalit Jain.
\newblock A bandit approach to multiple testing with false discovery control.
\newblock In {\em Proceedings of the 32nd International Conference on Neural Information Processing Systems}, NIPS'18, page 3664–3674, Red Hook, NY, USA, 2018. Curran Associates Inc.

\bibitem{Bandit1}
Kevin Jamieson, Matthew Malloy, Robert Nowak, and Sébastien Bubeck.
\newblock lil' ucb : An optimal exploration algorithm for multi-armed bandits, 2013.

\bibitem{johari2022always}
Ramesh Johari, Pete Koomen, Leonid Pekelis, and David Walsh.
\newblock Always valid inference: Continuous monitoring of a/b tests.
\newblock {\em Operations Research}, 70(3):1806--1821, 2022.

\bibitem{Bandit2}
Emilie Kaufmann, Olivier Capp{\'e}, and Aur{\'e}lien Garivier.
\newblock On the complexity of best arm identification in multi-armed bandit models.
\newblock {\em Journal of Machine Learning Research}, 17:1--42, 2016.

\bibitem{KGu:24}
Alexey Kurennoy.
\newblock Yeast: Yet another sequential test.
\newblock Git Repository, 2024.
\newblock This paper's code companion. \url{https://github.com/akurennoy/yeast}.

\bibitem{larsen2023statistical}
Nicholas Larsen, Jonathan Stallrich, Srijan Sengupta, Alex Deng, Ron Kohavi, and Nathaniel Stevens.
\newblock Statistical challenges in online controlled experiments: A review of a/b testing methodology, 2023.

\bibitem{LHa:22}
Michael Lindon, Dae~Woong Ham, Martin~P. Tingley, and Iavor Bojinov.
\newblock Anytime-valid linear models and regression adjusted causal inference in randomized experiments.
\newblock 2022.

\bibitem{LAl:22}
Michael Lindon and Alan Malek.
\newblock Anytime-valid inference for multinomial count data.
\newblock In S.~Koyejo, S.~Mohamed, A.~Agarwal, D.~Belgrave, K.~Cho, and A.~Oh, editors, {\em Advances in Neural Information Processing Systems}, volume~35, pages 2817--2831. Curran Associates, Inc., 2022.

\bibitem{Lindon22}
Michael Lindon, Chris Sanden, and Vach\'{e} Shirikian.
\newblock Rapid regression detection in software deployments through sequential testing.
\newblock In {\em Proceedings of the 28th ACM SIGKDD Conference on Knowledge Discovery and Data Mining}, KDD '22, page 3336–3346, New York, NY, USA, 2022. Association for Computing Machinery.

\bibitem{Loe:60}
M.~Loeve.
\newblock {\em Probability Theory. 2nd Edition}.
\newblock D. Van Nostrand Company Inc., 1960.

\bibitem{Mi:15}
Evan Miller.
\newblock Simple sequential A/B testing.
\newblock Blog Post, 2015.
\newblock \url{https://www.evanmiller.org/sequential-ab-testing.html}

\bibitem{BFl:79}
Peter~C. O'Brien and Thomas~R. Fleming.
\newblock A multiple testing procedure for clinical trials.
\newblock {\em Biometrics}, 35(3):549--556, 1979.

\bibitem{Po:77}
Stuart~J. Pocock.
\newblock Group sequential methods in the design and analysis of clinical trials.
\newblock {\em Biometrika}, 64(2):191--199, 1977.

\bibitem{PRa:23}
Aleksandr Podkopaev and Aaditya Ramdas.
\newblock Sequential predictive two-sample and independence testing.
\newblock In A.~Oh, T.~Naumann, A.~Globerson, K.~Saenko, M.~Hardt, and S.~Levine, editors, {\em Advances in Neural Information Processing Systems}, volume~36, pages 53275--53307. Curran Associates, Inc., 2023.

\bibitem{Ramdas_Wang_2025_HypothesisTestingWithEvalues}
Aaditya Ramdas and Ruodu Wang.
\newblock {\em Hypothesis Testing with E-values}.
\newblock Statistics \& Data Science Monographs / Now Publishers. Now Publishers, 2025.
\newblock Monograph; a unified treatment of e-values, including e-processes; 16 chapters, 426 pp.

\bibitem{Robins1970}
Herbert Robbins.
\newblock {Statistical Methods Related to the Law of the Iterated Logarithm}.
\newblock {\em The Annals of Mathematical Statistics}, 41(5):1397 -- 1409, 1970.

\bibitem{SSe:23}
Mårten Schultzberg and Sebastian Ankargren.
\newblock Choosing a sequential testing framework — comparisons and discussions.
\newblock Blog Post, 2023.

\bibitem{SSe:23:code}
Mårten Schultzberg and Sebastian Ankargren.
\newblock Simulation files for Schultzberg \& Ankargren's blogpost.
\newblock GitHub repository, 2023.
\newblock \url{https://engineering.atspotify.com/2023/03/choosing-sequential-testing-framework-comparisons-and-discussions}

\bibitem{Shafer_2021_TestingByBetting}
Glenn Shafer.
\newblock Testing by betting: A strategy for statistical and scientific communication.
\newblock {\em Journal of the Royal Statistical Society. Series A: Statistics in Society}, 184(2):407--431, 2021.

\bibitem{Shekhar_Ramdas_2021_NonparametricTwoSampleByBetting_arxiv}
Shubhanshu Shekhar and Aaditya Ramdas.
\newblock Game-theoretic formulations of sequential nonparametric one- and two-sample tests.
\newblock {\em arXiv preprint arXiv:2112.09162}, 2021.
\newblock Later version published in IEEE Transactions on Information Theory (2023).

\bibitem{Shekhar_Ramdas_2023_NonparametricTwoSampleByBetting_TIT}
Shubhanshu Shekhar and Aaditya Ramdas.
\newblock Nonparametric two-sample testing by betting.
\newblock {\em IEEE Transactions on Information Theory}, 70(2), 2023.

\bibitem{SWe:82}
Eric~V. Slud and L.~J. Wei.
\newblock Two-sample repeated significance tests based on the modified wilcoxon statistic.
\newblock {\em Journal of the American Statistical Association}, 77:862--868, 1982.

\bibitem{Va:00}
A.W. van~der Vaart.
\newblock {\em Asymptotic Statistics}.
\newblock Asymptotic Statistics. Cambridge University Press, 2000.

\bibitem{Vovk_1993_LogicOfProbability}
Vladimir Vovk.
\newblock A logic of probability with application to the foundations of statistics.
\newblock {\em Annals of Statistics}, 21(4):247--276, 1993.

\bibitem{Wald45}
A.~Wald.
\newblock {Sequential Tests of Statistical Hypotheses}.
\newblock {\em The Annals of Mathematical Statistics}, 16(2):117 -- 186, 1945.

\bibitem{WBr:16}
G.~Wassmer and W.~Brannath.
\newblock {\em Group Sequential and Confirmatory Adaptive Designs in Clinical Trials}.
\newblock Springer Series in Pharmaceutical Statistics. Springer International Publishing, 2016.

\bibitem{Whi:84}
H.~White.
\newblock {\em Asymptotic Theory for Econometricians}.
\newblock Economic Theory, Econometrics, and Mathematical Economics. Elsevier Science, 1984.

\bibitem{Bandit3}
Fanny Yang, Aaditya Ramdas, Kevin Jamieson, and Martin Wainwright.
\newblock A framework for multi-a(rmed)/b(andit) testing with online fdr control.
\newblock In {\em Proceedings of the 31st International Conference on Neural Information Processing Systems}, NIPS'17, page 5959–5968, Red Hook, NY, USA, 2017. Curran Associates Inc.

\bibitem{Zei:04}
Achim Zeileis.
\newblock Econometric computing with {HC} and {HAC} covariance matrix estimators.
\newblock {\em Journal of Statistical Software}, 11(10):1--17, 2004.

\bibitem{ZSu:20}
Achim Zeileis, Susanne K\"oll, and Nathaniel Graham.
\newblock Various versatile variances: An object-oriented implementation of clustered covariances in {R}.
\newblock {\em Journal of Statistical Software}, 95(1):1--36, 2020.

\end{thebibliography}
\end{document}